\theoremstyle{amsart}
\newfont{\fnt}{cmsy10}
\newfont{\sss}{cmr10}
\newfont{\azb}{wncyr10}
\newfont{\azbit}{wncyi10}
\theoremstyle{definition}
\theoremstyle{plain}
\newtheorem{conj}{Conjecture}
\newtheorem{vt}{Theorem}
\newtheorem{lm}{Lemma}
\newtheorem{ds}{Corollary}
\newtheorem{prp}{Proposition}
\theoremstyle{definition}
\newtheorem{pz}{Remark}
\newtheorem{pr}{Example}
\begin{document}
\title[The Darboux coordinates for a new family of Hamiltonian operators]{
{\protect\vspace*{-1cm}}The Darboux coordinates for a new family of Hamiltonian operators
and linearization of associated evolution equations}
\author{Ji\v{r}ina Vodov\'a}
\keywords{Hamiltonian operators, bi-Hamiltonian systems, evolution equations, linearization}
\subjclass[2010]{37K05, 37K10}
\address{Mathematical Institute, Silesian University in Opava, Na Rybn\'{i}\v{c}ku \nolinebreak 1, 746 01 Opava, Czech Republic}
\email{Jirina.Vodova@math.slu.cz}
\maketitle
\begin{abstract} {\protect\vspace*{-0.7cm}}
A. de Sole, V. G. Kac, and M. Wakimoto have recently introduced a new family of compatible Hamiltonian operators of the form $H^{(N,0)}=D^2\circ((1/u)\circ D)^{2n}\circ D$, where $N=2n+3$, $n=0,1,2,\dots$, $u$ is the dependent variable and $D$ is the total derivative with respect to the independent variable. We present a differential substitution that reduces any linear combination of these operators to an operator with constant coefficients and linearizes any evolution equation which is bi-Hamiltonian with respect to a pair of any nontrivial linear combinations of the operators $H^{(N,0)}$. We also give the Darboux coordinates for $H^{(N,0)}$ for any odd $N\geqslant 3$. 
\looseness=-1
\end{abstract}
\section{Introduction}
The Hamiltonian evolution equations are well known to play an important role in modern mathematical physics. Indeed, a Hamiltonian operator maps the variational derivatives of the conserved quantities into symmetries; this plays an important role in the theory of integrable systems which often turn out to be \textit{bi}-Hamiltonian, see e.g.\ \cite{dickey, dorfman, fokas, olver} and references therein.
It is thus no wonder that the study and, in particular, the classification of Hamiltonian operators is a subject of ongoing interest, see for instance \cite{astashov,cooke,dubrovin,dorfman,mokhov, mokhov2} and the works cited there.
\looseness=-1

Recently A. de Sole, V. G. Kac, and M. Wakimoto  have made a major advance in this area. Namely, in \cite{kac} they gave a conjectural classification of Poisson vertex algebras in one differential variable, i.e., of scalar Hamiltonian operators. \textit{Inter alia}, they have come up with a new infinite family of compatible Hamiltonian operators $H^{(N,0)}=D^2\circ((1/u)\circ D)^{2n}\circ D$, where $N=2n+3$, $n=0,1,2,\dots$, and $D$ denotes the total derivative with respect to the space variable.\looseness=-1

Although it is well known \cite{olver2, olver4} that there is no proper counterpart of the Darboux theorem on canonical forms of finite-dimensional Poisson structures for Hamiltonian operators associated with evolutionary PDEs, it is often possible to find new variables in which a Hamiltonian operator takes a simpler form. One such form is the Gardner operator $D$; in analogy with the finite-dimensional case the associated new variables are often called the Darboux coordinates, see e.g.\ \cite{cooke,olver2}. Bringing a Hamiltonian operator into the Gardner form enables one e.g.\ to render the associated Hamiltonian systems into the canonical Hamiltonian form and construct Lagrangian representations (modulo potentialization) for  these systems \cite{olver4}.\looseness=-1

In this paper we present the transformations that bring the operators $H^{(N,0)}$ into the Gardner form, see Corollary \ref{ds1} below.  Moreover, in Theorem \ref{th1} we give a differential substitution which \textit{simultaneously} turns the operators $H^{(N,0)}$ for all odd $N\geqslant 3$ into the operators with constant coefficients $\widetilde{H}^{(N,0)}=-D^{2n+1}$.\looseness=-1

These results could be employed e.g.\ for the study of (co)homology of the Poisson complexes associated with the operators $H^{(N,0)}$. The cohomologies in question play an important role e.g.\ in finding all Hamiltonian operators compatible with a given Hamiltonian operator and the associated multi-Hamiltonian systems, see for example \cite{krasilshchik, olver3, sergyeyev} and references therein.
Another possible application of the results in question is e.g.\ the construction of new integrable systems in spirit of \cite{fokas2} in the new variables from Theorem \ref{th1} or Corollary \ref{ds1} with the subsequent pullback to the original variables.
Last but not least, the differential substitution from Theorem~\ref{th1} linearizes any evolution equation which is bi-Hamiltonian with respect to a pair of any nontrivial linear combinations of the operators $H^{(N,0)}$, thus exhibiting a broad class of somewhat unusual (in that they are $C$-integrable rather than $S$-integrable) integrable bi-Hamiltonian systems; see Corollary \ref{ds2} for details. \looseness=-1

\section{Preliminaries}

In what follows we deal with Hamiltonian operators and associated Hamiltonian evolution equations involving a single spatial variable $x$ and a single dependent variable $u$. An
evolution equation of this kind has the form
$$u_t=K[u]=K(x,u,u_x,u_{xx},\dots),$$
where the square brackets indicate that $K$ is a \textit{differential function} in the sense of \cite{olver}, meaning that it depends on $x$, $u$, and finitely many derivatives of $u$ with respect to the space variable $x$. Recall (see e.g.\ \cite{dickey, dorfman, olver} for details) that an evolution equation is said to be \textit{Hamiltonian} with respect to the Hamiltonian operator $\mathcal{D}$ if it can be written in the form $$u_t=\mathcal{D}\delta_u\mathcal{T}[u],$$
where $\mathcal{T}=\int T[u]dx$ is the Hamiltonian functional, and $\delta_u$ denotes the variational derivative with respect to $u$.\looseness=-1

\begin{lm}[\cite{mokhov}]\label{lm1} Let $L_1$ be a Hamiltonian operator in the variables $x, u$.
Under the transformation
\begin{equation}\label{1}x=\varphi(y,v,v_y,\dots,v_{m}),\quad u=\psi(y,v,v_y,\dots,v_{n}),\end{equation}
where $v_{j}=D_y^j(v)$, where $D_y$ is the
total derivative with respect to $y$,
the operator $L_1$ goes into the Hamiltonian operator $L_2$ defined by the formula
\begin{equation}\overline{L}_1=(D_y(\varphi))^{-1}K^*\circ L_2\circ K,\end{equation}
where $\overline{L}_1$ is obtained from $L_1$ under the substitution (\ref{1}) and upon setting $D_x=(D_y(\varphi))^{-1}D_y$,
$$K=\sum_{i=0}^{\max(m,n)}(-1)^iD_y^i\circ\left(\frac{\partial\psi}{\partial v_{i}}D_y(\varphi)
-\frac{\partial\varphi}
{\partial v_{i}}D_y(\psi)\right),$$ 
and $K^*$ is the formal adjoint of $K$.
\end{lm}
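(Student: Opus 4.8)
The plan is to derive the formula from the single requirement that the transformation (\ref{1}) preserve the Poisson bracket attached to the Hamiltonian operator, and then to conclude that $L_2$ is Hamiltonian because it represents the \emph{same} bracket in the new coordinates. Recall that $L_1$ defines the bilinear bracket $\{\mathcal{F},\mathcal{G}\}=\int(\delta_u\mathcal{F})\,L_1\,(\delta_u\mathcal{G})\,dx$ on functionals; the content of the lemma is that this bracket is a coordinate-free object whose representative operator changes as stated.

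First I would work out how the variational derivative transforms under (\ref{1}), the essential point being that (\ref{1}) alters both the dependent variable and the integration measure. Since $x=\varphi$, one has $dx=D_y(\varphi)\,dy$ and $D_x=(D_y(\varphi))^{-1}D_y$, so the variation of $u$ \emph{at fixed $x$} induced by a variation $\delta v$ is
$$\delta u=\delta\psi-u_x\,\delta\varphi=(D_y(\varphi))^{-1}\sum_i\Bigl(\tfrac{\partial\psi}{\partial v_i}D_y(\varphi)-\tfrac{\partial\varphi}{\partial v_i}D_y(\psi)\Bigr)D_y^i(\delta v).$$
I recognize the differential operator acting on $\delta v$ here as $K^*$, the formal adjoint of the operator $K$ in the statement; this is precisely why $K$ is assembled from the combination $\frac{\partial\psi}{\partial v_i}D_y(\varphi)-\frac{\partial\varphi}{\partial v_i}D_y(\psi)$. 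Substituting this into the defining identity $\delta\mathcal{F}=\int(\delta_u\mathcal{F})\,\delta u\,dx$, converting $dx$ to $dy$ (which cancels the leading $(D_y(\varphi))^{-1}$), and integrating by parts to peel $\delta v$ off the derivatives, I obtain the variational chain rule $\delta_v\mathcal{F}=K\,\overline{\delta_u\mathcal{F}}$, where the bar denotes re-expression in the variables $y,v$.

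Next I would impose invariance of the bracket, $\int(\delta_v\mathcal{F})\,L_2\,(\delta_v\mathcal{G})\,dy=\int(\delta_u\mathcal{F})\,L_1\,(\delta_u\mathcal{G})\,dx$. Inserting $\delta_v\mathcal{F}=K\,\overline{\delta_u\mathcal{F}}$ and $\delta_v\mathcal{G}=K\,\overline{\delta_u\mathcal{G}}$ on the left, transferring the leftmost $K$ by integration by parts (which produces $K^*$), and rewriting the right-hand side with $dx=D_y(\varphi)\,dy$ and $L_1\mapsto\overline{L}_1$, I would compare the two integrands. Since the variational derivatives $\overline{\delta_u\mathcal{F}},\overline{\delta_u\mathcal{G}}$ separate the sandwiched operators (with the usual care about the kernel of $\delta$), matching them gives $K^*\circ L_2\circ K=D_y(\varphi)\cdot\overline{L}_1$, i.e. the asserted formula. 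Finally, that $L_2$ is Hamiltonian follows because it represents, in the coordinates $y,v$, the very same bilinear bracket as the Hamiltonian operator $L_1$: skew-symmetry of the bracket together with invertibility of $K$ forces $L_2^*=-L_2$, while the Jacobi identity, being a property of the bracket and not of the particular operator representing it, is inherited from $L_1$.

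I expect the main obstacle to be the bookkeeping in the second paragraph: correctly isolating the variation of $u$ at fixed $x$ from the reparametrization of the base, and tracking the single surviving weight $D_y(\varphi)$ produced by the interplay between the measure change $dx=D_y(\varphi)\,dy$ and the rule $D_x=(D_y(\varphi))^{-1}D_y$. Placing the adjoints and that leftover factor correctly is exactly what pins down the prefactor $(D_y(\varphi))^{-1}$ and the appearance of $K^*$ (rather than $K$) on the left; the skew-symmetry and Jacobi verifications are then essentially formal once the bracket is identified.
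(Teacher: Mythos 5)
This lemma is not proved in the paper at all: it is imported verbatim from Mokhov's work (the citation \cite{mokhov} in the statement), so there is no in-paper proof to compare your attempt against. Judged on its own, your derivation is correct and is essentially the standard argument (and, in substance, Mokhov's original one): the vertical variation $\delta u\big|_{x\ \mathrm{fixed}}=\delta\psi-u_x\,\delta\varphi=(D_y(\varphi))^{-1}\sum_i a_i D_y^i(\delta v)$ with $a_i=\frac{\partial\psi}{\partial v_i}D_y(\varphi)-\frac{\partial\varphi}{\partial v_i}D_y(\psi)$, the cancellation of $(D_y(\varphi))^{-1}$ against $dx=D_y(\varphi)\,dy$, the resulting chain rule $\delta_v\mathcal{F}=K(\overline{\delta_u\mathcal{F}})$ (note $K^*=\sum_i a_iD_y^i$, so $K=\sum_i(-1)^iD_y^i\circ a_i$ exactly as in the statement), and the bracket-invariance identity $K^*\circ L_2\circ K=D_y(\varphi)\,\overline{L}_1$ all check out, as does the inheritance of skew-symmetry and the Jacobi identity from the bracket itself.

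Two technical points deserve sharper treatment than your sketch gives them. First, your "matching" step presupposes that the transformed bracket is representable by \emph{some} operator $L_2$ and then identifies it by testing against variational derivatives; a cleaner route that sidesteps the separation argument is to invert the chain rule, $\overline{\delta_u\mathcal{F}}=K^{-1}\delta_v\mathcal{F}$, and read off $L_2=(K^*)^{-1}\circ D_y(\varphi)\,\overline{L}_1\circ K^{-1}$ directly, where $K^{-1}$ is the formal (pseudodifferential) inverse. Second, this makes explicit that $K$ is generally invertible only in the formal sense, so $L_2$ may acquire nonlocal terms unless (\ref{1}) is a contact transformation --- precisely the caveat recorded in Remark \ref{pz1}; your appeal to "invertibility of $K$" should be understood in this formal sense. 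Neither point is a gap in the underlying mathematics, only in the level of detail of the write-up.
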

\begin{pz}\label{pz1}
Note that in general the operator $L_2$ may contain nonlocal terms unless (\ref{1}) is a contact transformation, cf.\ e.g.\ \cite{astashov, kac, mokhov}.
\end{pz}
\section{The main result}
\begin{vt}\label{th1}
The transformation
$x=v$, $u=1/v_y$ turns
the $N${\rm th} order Hamiltonian operator $H^{(N,0)}=D_x^2\circ((1/u)\circ D_x)^{2n}\circ D_x$, where $N=2n+3$, $n=0,1,2,\dots$, into the Hamiltonian operator {\em with constant coefficients} $\widetilde{H}^{(N,0)}=-D_y^{2n+1}$. \end{vt}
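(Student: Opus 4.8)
The plan is to apply Lemma~\ref{lm1} directly to $L_1=H^{(N,0)}$ with $\varphi=v$ and $\psi=1/v_y$, compute the first-order operator $K$ and its formal adjoint $K^*$, and then check that the operator $L_2$ singled out by the relation $\overline{L}_1=(D_y(\varphi))^{-1}K^*\circ L_2\circ K$ is precisely $-D_y^{2n+1}$.

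First I would assemble the ingredients of Lemma~\ref{lm1} for this substitution. Since $\varphi=v$ depends only on $v$ and $\psi=1/v_y$ only on $v_y$, the sum defining $K$ truncates at $i=1$, and one has $D_y(\varphi)=v_y$, $\partial\varphi/\partial v=1$, $\partial\psi/\partial v_y=-1/v_y^2$, with all remaining partial derivatives equal to zero, while $D_y(\psi)=-v_{yy}/v_y^2$. Plugging these into the formula for $K$ gives the $i=0$ term $v_{yy}/v_y^2$ and the $i=1$ term $D_y\circ(1/v_y)$; expanding the latter as $D_y\circ(1/v_y)=(1/v_y)D_y-v_{yy}/v_y^2$, the two multiplication terms cancel and I am left with the clean expression $K=(1/v_y)D_y$, whence $K^*=-D_y\circ(1/v_y)$.

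Next I would compute $\overline{H}^{(N,0)}$ by substituting $x=v$, $u=1/v_y$ into $H^{(N,0)}$ and setting $D_x=(D_y(\varphi))^{-1}D_y=(1/v_y)D_y$. The decisive observation is that $1/u=v_y$, so that $(1/u)\circ D_x=v_y\cdot(1/v_y)D_y=D_y$; hence the inner $2n$-fold product collapses to the constant-coefficient operator $D_y^{2n}$ and $\overline{H}^{(N,0)}=D_x^2\circ D_y^{2n}\circ D_x$. It then remains to exhibit the factorization required by Lemma~\ref{lm1}. Observing that $D_x=K$ and that $D_x^2=(1/v_y)D_y\circ(1/v_y)D_y=-(1/v_y)\,K^*\circ D_y$, I would rewrite $\overline{H}^{(N,0)}=-(1/v_y)\,K^*\circ D_y^{2n+1}\circ K=(1/v_y)\,K^*\circ(-D_y^{2n+1})\circ K$. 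Comparing with $\overline{L}_1=(D_y(\varphi))^{-1}K^*\circ L_2\circ K$ and using $(D_y(\varphi))^{-1}=1/v_y$ identifies $L_2=-D_y^{2n+1}=\widetilde{H}^{(N,0)}$, which has constant coefficients, as asserted.

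The argument is short, and I expect the only genuine subtlety to be the operator bookkeeping: correctly expanding compositions such as $D_y\circ(1/v_y)$ into coefficient-times-$D_y$ form (this is what produces the cancellation in $K$), and spotting the rewriting $D_x^2=-(1/v_y)K^*\circ D_y$ that makes the factorization through $K^*$ and $K$ visible. Once the collapse $(1/u)\circ D_x=D_y$ is observed, the constancy of the coefficients of $L_2$ is automatic, so no delicate cancellation beyond the one in $K$ is needed.
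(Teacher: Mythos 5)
Your proposal is correct and follows exactly the route the paper indicates: its entire proof of Theorem~\ref{th1} is the remark that the result follows by a straightforward application of Lemma~\ref{lm1}, which is precisely what you carry out. Your computations ($K=(1/v_y)D_y$, $K^*=-D_y\circ(1/v_y)$, the collapse $(1/u)\circ D_x=D_y$, and the factorization $\overline{H}^{(N,0)}=(1/v_y)\,K^*\circ(-D_y^{2n+1})\circ K$ identifying $L_2=-D_y^{2n+1}$) all check out, so you have in effect supplied the details the paper omits.
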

The proof is obtained by a straightforward application of Lemma \ref{lm1}.
Even though the transformation from
Theorem~\ref{th1} is not contact, in the particular case under study
the transformed operators $\widetilde{H}^{(N,0)}$ happen to be free of nonlocal terms (cf.\  Remark \ref{pz1}).

Using Lemma \ref{lm1} we can further amplify the result of Theorem \ref{th1} by providing the Darboux coordinates (cf.\ Introduction) for the operators $H^{(N,0)}$.\looseness=-1
\begin{ds}\label{ds1}
The transformation
$x=(-1)^{\frac{n+1}{2}}w_{n}$, $u=(-1)^{\frac{n+1}{2}}/w_{n+1}$, where $w_{k}=D_z^ k(w)$, and $z$ is the new independent variable, maps
the Hamiltonian operator $H^{(N,0)}=D_x^2\circ((1/u)\circ D_x)^{2n}\circ D_x$ to the first-order Gardner operator $D_z$ for any odd $N\geqslant 3$.
\end{ds}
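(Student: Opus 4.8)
The plan is to build on Theorem~\ref{th1}, which already reduces $H^{(N,0)}$ to the constant-coefficient operator $\widetilde{H}^{(N,0)}=-D_y^{2n+1}$ via $x=v$, $u=1/v_y$. It then suffices to exhibit a second differential substitution, from $(z,w)$ to $(y,v)$, that carries $-D_y^{2n+1}$ into the Gardner operator $D_z$, and to verify that composing it with the substitution of Theorem~\ref{th1} reproduces exactly the transformation stated in the corollary.

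First I would seek the intermediate substitution in the simplest possible form, namely $y=\mu z$ and $v=\lambda\,w_n$ with constants $\mu,\lambda$ (recall $w_n=D_z^n(w)$). Applying Lemma~\ref{lm1} with $\varphi=\mu z$, $\psi=\lambda w_n$ and $L_1=-D_y^{2n+1}$ is routine: only the index $i=n$ contributes, so $K=(-1)^n\lambda\mu\,D_z^n$ and $K^*=\lambda\mu\,D_z^n$, while $\overline{L}_1=-\mu^{-(2n+1)}D_z^{2n+1}$ after substituting $D_y=\mu^{-1}D_z$. Matching $\overline{L}_1=(D_z\varphi)^{-1}K^*\circ L_2\circ K$ with $L_2=D_z$ forces the single scalar relation $\lambda^2\mu^{2n+2}=(-1)^{n+1}$.

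Next I would pin down $\mu,\lambda$ by demanding that the composite substitution agree with the corollary. Since Theorem~\ref{th1} gives $x=v=\lambda w_n$ and, using $D_y=(D_z y)^{-1}D_z=\mu^{-1}D_z$, we get $u=1/v_y=\mu/(\lambda w_{n+1})$, the choice $\mu=(-1)^{n+1}$, $\lambda=(-1)^{(n+1)/2}$ yields precisely $x=(-1)^{(n+1)/2}w_n$ and $u=(-1)^{(n+1)/2}/w_{n+1}$; one checks that this choice also satisfies the constraint $\lambda^2\mu^{2n+2}=(-1)^{n+1}$, since $\lambda^2=(-1)^{n+1}$ and $\mu^{2n+2}=1$. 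Because the reduction rule of Lemma~\ref{lm1} is covariant under composition of substitutions, the composite substitution sends $H^{(N,0)}$ to $D_z$, which is the assertion. Equivalently, and to keep the argument self-contained, I would apply Lemma~\ref{lm1} once to the full substitution $\varphi=(-1)^{(n+1)/2}w_n$, $\psi=(-1)^{(n+1)/2}/w_{n+1}$; here the two contributing terms in $K$ telescope via $[D_z,\,1/w_{n+1}]=D_z(1/w_{n+1})$ into the compact local form $K=-D_z^n\circ(1/w_{n+1})\circ D_z$, and a direct substitution then again gives $L_2=D_z$.

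The computations above are essentially mechanical; the points that require care are the scalar normalizations. In particular, for even $n$ the factor $(-1)^{(n+1)/2}$ is imaginary, so the Darboux coordinates are genuinely complex, and the verification that $L_2=D_z$ rather than some nonunit constant multiple of it hinges on the identities $\lambda^2=(-1)^{n+1}$ and $\alpha^4=1$ for $\alpha=(-1)^{(n+1)/2}$; keeping track of these signs and fourth roots of unity is the only real obstacle. A secondary point worth confirming explicitly is that, exactly as in Theorem~\ref{th1} and despite the substitution being non-contact, no nonlocal terms survive in $L_2$ (cf.\ Remark~\ref{pz1}); this is guaranteed by the explicit local form of $K$ just displayed.
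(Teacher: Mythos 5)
Your proposal is correct and follows essentially the same route as the paper: the paper derives Corollary \ref{ds1} precisely by applying Lemma \ref{lm1} to amplify Theorem \ref{th1} (it leaves the computation implicit), and your composition of the substitution $x=v$, $u=1/v_y$ with the auxiliary substitution $y=(-1)^{n+1}z$, $v=(-1)^{(n+1)/2}w_n$, together with the sign constraint $\lambda^2\mu^{2n+2}=(-1)^{n+1}$, fills in exactly that computation; your direct one-step verification with $K=-D_z^n\circ(1/w_{n+1})\circ D_z$ checks out as well. The observation that the coordinates are genuinely complex for even $n$ is a correct and worthwhile side remark.
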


\begin{pz}\label{pz2}
The inverse of the transformation $x=v$, $u=1/v_y$ is nothing but the extended hodograph transformation in the sense of \cite{clarkson}. Let us stress that, unlike the original transformation, the said inverse is not a differential substitution, i.e., it cannot be written in the form $y=\xi(x,u,u_x,\dots,u_q)$, $v=\chi(x,u,u_x,\dots, u_p)$.
\end{pz}
 In \cite{kac} the following conjecture is stated:
\begin{conj}[De Sole, Kac, Wakimoto]
For any translation-invariant Hamiltonian operator $H$ of order $N\geq 7$ there exists a contact transformation that brings $H$ to either a quasiconstant coefficient skew-adjoint differential operator, or to a linear combination of the operators $H^{(j,0)}$ with $3\leq j\leq N$, $j$ odd. \end{conj}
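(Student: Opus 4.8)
The plan is to treat this as a normal-form classification inside the de Sole--Kac theory of Poisson vertex algebras of \cite{kac}, proceeding by induction on the (necessarily odd) order $N$. First I would encode a translation-invariant Hamiltonian operator $H$ of order $N$ through its $\lambda$-bracket and record the two defining constraints: skew-symmetry, which already forces $N$ to be odd and thus matches the appearance of odd $j$ in the statement, and the Jacobi identity, i.e.\ the vanishing of the functional Schouten bracket $[H,H]=0$. The natural point of departure is the leading coefficient $a_N[u]$ of $H=a_N D_x^N+\dots$: under a contact transformation this coefficient transforms multiplicatively by a definite weight factor, so the first step is to produce a contact transformation that normalizes $a_N$ to a canonical representative, namely a quasiconstant in the first branch of the alternative, or the leading symbol of a combination of the $H^{(j,0)}$ in the second.

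Next I would peel off the top-order part and analyze the subleading coefficients. The identity $[H,H]=0$ couples $a_N$ to the lower-order coefficients through an overdetermined system of differential-polynomial equations, and the aim would be to show that, once the leading term is in canonical form, these equations leave only two possibilities: the remaining terms either trivialize after a further contact change (the quasiconstant branch), or assemble, order by order, into a linear combination of the building blocks $H^{(j,0)}$ with $j<N$. At this stage I would invoke the inductive hypothesis on the reduced lower-order operator, together with the rigidity of the $H^{(j,0)}$ reflected in Theorem~\ref{th1} and Corollary~\ref{ds1}, where each $H^{(j,0)}$ is seen to possess a distinguished Gardner/Darboux representative; it is this rigidity that should pin down the $H^{(j,0)}$ as the only admissible non-quasiconstant contributions.

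The main obstacle, and the reason this remains a conjecture rather than a theorem, is precisely the resolution of the Jacobi system for general $N$ while \emph{simultaneously} controlling the entire tower of lower-order coefficients and insisting that the normalizing map be \emph{contact}. Unlike the transformations of Theorem~\ref{th1} and Corollary~\ref{ds1}, which are genuinely non-contact and hence a priori liable to introduce nonlocal terms (cf.\ Remarks~\ref{pz1} and~\ref{pz2}), a contact transformation carries very little freedom, so establishing that this limited freedom already suffices to reach one of the two normal forms is delicate. I would therefore expect the honest content of a proof to split into (i) a uniform solution of the master equations $[H,H]=0$ modulo contact equivalence, and (ii) an orbit analysis of the contact pseudogroup acting on order-$N$ operators. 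Step (i) in closed form for all $N$ is the part I anticipate to be genuinely hard, and a careful verification of the base case $N=7$ would almost certainly have to precede, and guide, any attempt at the general inductive argument.
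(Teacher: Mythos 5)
There is nothing to compare your proposal against: the statement you were asked to prove is a \emph{conjecture}. It is quoted by the paper from \cite{kac} precisely as an open problem (the paper attributes it to De Sole, Kac, and Wakimoto), and the paper offers no proof of it; on the contrary, the paper's own contribution here is to reformulate and \emph{strengthen} it into further conjectures (Conjectures \ref{conj1}--\ref{conj3}) by enlarging the class of allowed transformations. Your write-up implicitly recognizes this (``the reason this remains a conjecture rather than a theorem''), but what you have produced is a research outline, not a proof: your steps (i) (solving $[H,H]=0$ modulo contact equivalence for all $N$) and (ii) (the orbit analysis of the contact pseudogroup) are exactly the unresolved content of the conjecture, and you establish neither. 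So, judged as a proof, the entire argument is a gap.

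One technical point in your outline deserves a warning, should you ever try to carry it out. You propose to use the ``rigidity'' of the operators $H^{(j,0)}$ reflected in Theorem~\ref{th1} and Corollary~\ref{ds1} to pin down the non-quasiconstant branch. But those results rest on transformations of the form $x=v$, $u=1/v_y$ (respectively $x=(-1)^{(n+1)/2}w_n$, $u=(-1)^{(n+1)/2}/w_{n+1}$), which are differential substitutions that are \emph{not} contact --- this is stressed in Remark~\ref{pz2}, and the whole point of Conjectures \ref{conj1}--\ref{conj3} is that adding such non-contact maps changes the equivalence classes. Consequently Theorem~\ref{th1} and Corollary~\ref{ds1} carry no information about contact orbits and cannot distinguish the $H^{(j,0)}$ from anything else \emph{within contact equivalence}; indeed, under the larger pseudogroup the $H^{(j,0)}$ cease to be distinguished normal forms at all, since they become equivalent to constant-coefficient operators. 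Any honest attack on the conjecture would have to find invariants of the contact action itself, which is precisely what is missing both in your sketch and, as far as is known, in the literature.
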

Recall that a differential function is called quasiconstant \cite{kac} if it depends only on $x$.

If we further allow for the transformation of the form
$x=v, u=1/v_y$ from Theorem \ref{th1} and use the latter, 
we can state a somewhat stronger conjecture:
\begin{conj}\label{conj1}
Any translation-invariant Hamiltonian operator $H$ of order $N \geq 7$  
can be transformed into a quasiconstant coefficient skew-adjoint differential operator
using either a contact transformation or a composition thereof with the transformation $x=v, u=1/v_y$.
\end{conj}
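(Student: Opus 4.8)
The plan is to derive Conjecture \ref{conj1} from the conjecture of De Sole, Kac, and Wakimoto stated just above, combined with Theorem \ref{th1}; the new statement is essentially a formal consequence of these two ingredients, so that all of the genuine difficulty is inherited from the former. First I would invoke the De Sole--Kac--Wakimoto dichotomy: a translation-invariant Hamiltonian operator $H$ of order $N\geq 7$ can be carried by some contact transformation either (a) directly to a quasiconstant coefficient skew-adjoint differential operator, or (b) to a linear combination $\sum_{j}c_j H^{(j,0)}$ with $3\leq j\leq N$, $j$ odd. In case (a) there is nothing further to do, so the whole argument reduces to case (b).

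In case (b) I would apply the substitution $x=v$, $u=1/v_y$ of Theorem \ref{th1} to the operator $\sum_j c_j H^{(j,0)}$. The crucial point is that this one substitution reduces \emph{every} summand simultaneously, sending $H^{(j,0)}$ to the constant-coefficient operator $\widetilde H^{(j,0)}=-D_y^{\,j-2}$. This is compatible with the linear combination because the covariance recipe of Lemma \ref{lm1} is linear in the operator being transported: the passage $L_1\mapsto L_2$ factors through the substitution map $L_1\mapsto\overline{L}_1$, which is linear, followed by inversion of the conjugation $L_2\mapsto(D_y(\varphi))^{-1}K^*\circ L_2\circ K$, which is one and the same linear map for all $j$ since $\varphi=v$ and $\psi=1/v_y$ do not depend on $j$. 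Consequently $\sum_j c_j H^{(j,0)}$ is carried termwise to $\sum_j c_j\widetilde H^{(j,0)}=-\sum_j c_j D_y^{\,j-2}$, an operator with constant coefficients, and composing the contact transformation of case (b) with $x=v$, $u=1/v_y$ produces the transformation required by the conjecture. In particular the two alternatives offered by Conjecture \ref{conj1} correspond exactly to the two branches of the dichotomy: branch (a) is handled by a contact transformation alone, branch (b) by its composition with $x=v$, $u=1/v_y$.

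It then remains to verify the qualitative properties of the outcome. By Lemma \ref{lm1} the transported operator is again Hamiltonian, and each constant-coefficient odd-order operator $-D_y^{\,j-2}$ (here $j-2$ is odd) is skew-adjoint, whence their linear combination is skew-adjoint as well; being constant-coefficient it is \emph{a fortiori} quasiconstant coefficient, exactly the normal form demanded. Locality presents no difficulty: the remark following Theorem \ref{th1} ensures that each $\widetilde H^{(j,0)}$ is free of nonlocal terms, and a finite sum of local operators is local. One should still check the minor point that the contact transformation furnished by De Sole--Kac--Wakimoto delivers the summands in precisely the normalized shape $H^{(j,0)}=D_x^2\circ((1/u)\circ D_x)^{2n}\circ D_x$ to which Theorem \ref{th1} applies verbatim.

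The main obstacle is that this reduction is inescapably conditional. The step that does all the work — case (b), and indeed the very dichotomy — is supplied by the still-open De Sole--Kac--Wakimoto conjecture, and it is their asserted classification of \emph{all} translation-invariant Hamiltonian operators of order $N\geq 7$ that is the hard part; once it is granted, Conjecture \ref{conj1} follows by the routine termwise use of Theorem \ref{th1} described above. I would therefore not expect to establish Conjecture \ref{conj1} unconditionally. The realistic deliverable is the implication ``modulo the De Sole--Kac--Wakimoto conjecture, every such $H$ can be brought to a quasiconstant coefficient skew-adjoint operator by a contact transformation possibly composed with $x=v$, $u=1/v_y$,'' with the unconditional statement contingent on future progress on the underlying classification.
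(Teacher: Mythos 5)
Your proposal matches the paper's own reasoning: the statement is presented (and remains) a \emph{conjecture} precisely because it is obtained by combining the unproven De Sole--Kac--Wakimoto dichotomy with Theorem \ref{th1}, whose substitution $x=v$, $u=1/v_y$ disposes of the linear-combination branch termwise (the linearity of the transformation law in Lemma \ref{lm1} making the termwise reduction legitimate, as you note). You correctly identify that only the conditional implication can be established, which is exactly why the paper states this as a strengthened conjecture rather than a theorem.
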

The transformation $x=v, u=1/v_y$ can be written as a composition of the potentiation $x=z$, $u=w_z$ and of the hodograph transformation 
$z=v$, $w=y$. The latter is a contact (in fact, even a point) transformation, 
so, with the obvious change of notation, we can recast the above conjecture as follows:\looseness=-1
\begin{conj}\label{conj2}
Any translation-invariant Hamiltonian operator $H$ of order $N \geq 7$
can be transformed into a quasiconstant coefficient skew-adjoint differential operator
using either  a contact transformation or a composition thereof with 
the potentiation $x=y, u=v_y$.
\end{conj}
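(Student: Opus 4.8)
The plan is to deduce Conjecture~\ref{conj2} from the conjecture of \cite{kac} together with Theorem~\ref{th1}, so that the only genuinely open ingredient is the former. I would first use the reduction already indicated in the text: because the hodograph $z=v$, $w=y$ is a point (hence contact) transformation and $x=v$, $u=1/v_y$ is its composition with the potentiation $x=z$, $u=w_z$, the non-contact content of $x=v$, $u=1/v_y$ is carried entirely by the potentiation. Hence Conjectures~\ref{conj1} and~\ref{conj2} are equivalent, and it suffices to establish the former, i.e.\ to allow a contact transformation together with $x=v$, $u=1/v_y$.

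Next I would invoke the conjecture of \cite{kac} for the given translation-invariant Hamiltonian operator $H$ of order $N\geq 7$. It provides a contact transformation carrying $H$ to one of two normal forms. In the first case $H$ is brought to a quasiconstant coefficient skew-adjoint operator and the assertion holds with a contact transformation alone. In the second case $H$ is carried to a linear combination $\sum_{j}c_j H^{(j,0)}$ taken over odd $j$ with $3\leq j\leq N$, and it remains only to push this combination further.

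The key step is to show that the single substitution $x=v$, $u=1/v_y$ reduces the whole combination $\sum_j c_j H^{(j,0)}$ to constant coefficients at once. For this I would combine Theorem~\ref{th1}, which gives $H^{(j,0)}\mapsto\widetilde{H}^{(j,0)}=-D_y^{\,j-2}$ for each odd $j$, with the observation that, for a \emph{fixed} transformation of the form~(\ref{1}), the correspondence $L_1\mapsto L_2$ of Lemma~\ref{lm1} is linear in the operator being transformed: the factors $K$, $K^*$ and $(D_y(\varphi))^{-1}$ depend only on $\varphi$ and $\psi$, while the passage from $L_1$ to $\overline{L}_1$ is mere substitution, so that $L_2$ is obtained from $\overline{L}_1$ by conjugation by fixed operators. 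Therefore $\sum_j c_j H^{(j,0)}\mapsto -\sum_j c_j D_y^{\,j-2}$. Each $j$ being odd forces $j-2$ to be odd, so every summand, and hence the sum, is a constant coefficient operator that is skew-adjoint (odd powers of $D_y$ are skew-adjoint) and a fortiori quasiconstant. Composing the contact transformation from the second case with $x=v$, $u=1/v_y$ thus yields the required normal form, completing the deduction of Conjecture~\ref{conj1}.

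The main obstacle is that the whole argument is conditional on the conjecture of \cite{kac}, which is still open; this is precisely why the present statement must be phrased as a conjecture rather than proved as a theorem. The part I would emphasize as rigorously available---and which upgrades the dichotomy of \cite{kac} to the single clean conclusion stated here---is the second-case reduction, namely that one and the same substitution simultaneously trivializes every $H^{(j,0)}$, and hence every linear combination of them. A minor point still to be verified is locality: Lemma~\ref{lm1} and Remark~\ref{pz1} caution that a non-contact substitution may introduce nonlocal terms, but Theorem~\ref{th1} shows that none arise in this family, so the resulting operator is a bona fide differential operator with constant coefficients.
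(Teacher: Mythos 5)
Your proposal is correct and follows essentially the same route as the paper: the statement is a conjecture there, justified exactly as you do it --- by decomposing $x=v$, $u=1/v_y$ into the potentiation $x=z$, $u=w_z$ followed by the hodograph (point, hence contact) transformation $z=v$, $w=y$, so that Conjecture~\ref{conj2} is a recasting of Conjecture~\ref{conj1}, which in turn strengthens the conjecture of \cite{kac} via Theorem~\ref{th1} applied (by linearity of the correspondence in Lemma~\ref{lm1}) to linear combinations of the $H^{(j,0)}$. Your explicit verification that the resulting $-\sum_j c_j D_y^{j-2}$ is skew-adjoint, quasiconstant, and free of nonlocal terms makes rigorous what the paper leaves implicit, and you rightly note that the whole argument remains conditional on the open conjecture of \cite{kac}.
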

Thus, assuming that the above conjecture holds true, 
extending the class of allowed equivalence transformation using potentiation enables 
us to have just one type of normal forms for  translation-invariant Hamiltonian operators, namely, the skew-adjoint differential operators with coefficients that depend only on $x$.  
 
Upon further allowing for the transformations of the form
$x=y, v=\sum_{i=0}^k b_i(x)v_i$ and using the following
\begin{lm}
Let $H=\sum_{i=0}^N a_i(x)D_x^i$ be quasiconstant skew-adjoint differential operator. There exists a transformation of the form
$x=y,v=\sum_{i=0}^k b_i(x)v_i$ that turns $H$ to the Gardner operator $D_y$.
\end{lm}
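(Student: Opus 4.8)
The plan is to recognize the asserted transformation as a particular instance of Lemma~\ref{lm1} and thereby turn the statement into a factorization problem for $H$. Taking $\varphi=y$ (so that $x=y$) and $\psi=\sum_{i=0}^{k}b_i(y)\,v_i$ with $v_i=D_y^i v$ in Lemma~\ref{lm1}, one has $D_y(\varphi)=1$ and $\partial\varphi/\partial v_i=0$, whence the operator $K$ of that lemma collapses to the purely differential operator
\[
K=\sum_{i=0}^{k}(-1)^i D^i\circ b_i,\qquad K^{*}=\sum_{i=0}^{k}b_i\circ D^i ,
\]
where $D=D_x=D_y$. Since the coefficients $a_i$ of $H$ depend only on $x=y$, the substituted operator $\overline{H}$ is just $H$ written in $y$, and the transformation rule of Lemma~\ref{lm1} with target $L_2=D$ and $D_y(\varphi)=1$ reduces to the single requirement
\[
H=K^{*}\circ D\circ K .
\]
Thus everything comes down to producing coefficients $b_0,\dots,b_k$ for which $H$ admits such a factorization.

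First I would fix the bookkeeping. A nonzero scalar differential operator can be skew-adjoint only if its order is odd, so $N=2m+1$; and since $\operatorname{ord}(K^{*}DK)=2\operatorname{ord}(K)+1$, the factorization forces $k=m=(N-1)/2$. The ansatz is automatically of the right type, because $(K^{*}DK)^{*}=K^{*}D^{*}K=-K^{*}DK$, so $K^{*}DK$ is skew-adjoint for every $K$. Consequently both sides of $H=K^{*}DK$ are skew-adjoint of order $2m+1$. Here I would use the elementary fact that the even-order coefficients of a skew-adjoint operator are explicit differential polynomials in its odd-order coefficients (for instance, in order three $a_2=\tfrac32 a_3'$ and $a_0=\tfrac12 a_1'-\tfrac14 a_3'''$, while the odd-order skew relations are automatically consistent); hence it suffices to match the $m+1$ odd-order coefficients of the two sides, the even ones then agreeing automatically.

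The remaining task is to solve these $m+1$ equations for $b_0,\dots,b_m$. Writing $K=\kappa_m D^m+\dots+\kappa_0$ (the $\kappa_j$ being related invertibly to the $b_i$), comparison of the top coefficients gives the algebraic relation $(-1)^m\kappa_m^2=a_N$, which fixes $\kappa_m=\sqrt{(-1)^m a_N}$; this is real precisely when $(-1)^m a_N>0$, a sign condition on the leading coefficient that one must either impose or absorb by allowing the reflection $y\mapsto -y$ (equivalently the target $-D$). Substituting $\kappa_m$ back, the remaining odd-order coefficient equations constitute a \emph{determined} system of ordinary differential equations for $\kappa_0,\dots,\kappa_{m-1}$; it is genuinely coupled and nonlinear (already for $m=1$ the equation for $\kappa_0$ is of Riccati type), but its principal part is nondegenerate because every leading-derivative coefficient is proportional to the nonvanishing $\kappa_m$. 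Hence the system can be put in normal form and solved locally by the existence theorem for ODEs, and unwinding the relation between $\kappa_j$ and $b_i$ yields the desired transformation.

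The main obstacle is exactly the solvability of this last system: one must verify that, after $\kappa_m$ is fixed, the $m$ remaining equations can be simultaneously brought to normal form --- i.e.\ that the matrix of their leading-derivative coefficients is invertible, which is where $\kappa_m\neq 0$ (equivalently $a_N\neq 0$) enters --- so that local existence applies despite the coupling and the nonlinearity; the square-root sign condition above is the only genuine hypothesis on $H$. Finally, the target $D_y$ is local by construction: although the substitution $u=\sum b_i(y)v_i$ is of higher order and hence not contact, we are prescribing $L_2=D_y$ directly and solving for the $b_i$ so that $H=K^{*}D_yK$, so by Lemma~\ref{lm1} this transformation carries $H$ into the local operator $D_y$ and no nonlocal terms can intervene (cf.\ Remark~\ref{pz1}).
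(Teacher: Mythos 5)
You should know at the outset that the paper contains \emph{no} proof of this lemma: it is stated bare, purely as a stepping stone from Conjecture~\ref{conj2} to Conjecture~\ref{conj3}. So there is nothing to compare your argument against, and what follows is an assessment of your proposal on its own terms.

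Your reduction is certainly the intended one and is correct as far as it goes: with $\varphi=y$, $\psi=\sum_i b_i(y)v_i$, Lemma~\ref{lm1} turns the claim into the factorization problem $H=K^{*}\circ D_y\circ K$ with $K=\sum_i(-1)^iD^i\circ b_i$, and your bookkeeping (odd order $N=2m+1$, $\operatorname{ord}K=m$, automatic skew-adjointness of $K^{*}D K$, sufficiency of matching the odd-order coefficients) is sound; I checked your order-three identities $a_2=\tfrac32a_3'$, $a_0=\tfrac12a_1'-\tfrac14a_3'''$ and they are right. Your sign remark is moreover a genuine finding, not a pedantic one: since the leading coefficient of $K^{*}D K$ is $(-1)^m\kappa_m^2$, over the reals the operator $H=D_x^3$, for instance, can never be brought to $D_y$ by a transformation of the stated form, only to $-D_y$; so the lemma as printed requires either the hypothesis $(-1)^m a_N>0$, or complex $b_i$, or the target $\pm D_y$. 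Note, however, that your proposed repair by the reflection $y\mapsto -y$ is \emph{not} a transformation of the form $x=y$, $v=\sum_i b_i(x)v_i$; it salvages the lemma's role in Conjecture~\ref{conj3} (where contact transformations are allowed anyway) but not the lemma as literally stated.

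The genuine gap is the step you yourself flag as the main obstacle: local solvability of the remaining equations for $\kappa_0,\dots,\kappa_{m-1}$. Your criterion --- invertibility of ``the matrix of their leading-derivative coefficients'' --- is not even well-formed for this system, because the equations have different orders in different unknowns: in the equation obtained by matching the coefficient of $D^{2l+1}$, the unknown $\kappa_j$ enters with derivatives up to order $m+j-2l$, the top term being paired with the undifferentiated $\kappa_m$. Already for $N=5$ the two equations involve, respectively, $(\kappa_1',\kappa_1,\kappa_0)$ and $(\kappa_1''',\dots,\kappa_0'',\dots)$: this is not a system one puts in normal form by inverting a matrix. The workable procedure is a cascade elimination --- solve the $D^{2m-1}$-equation \emph{algebraically} for $\kappa_{m-2}$ (whose coefficient there is a multiple of $\kappa_m$), substitute into the lower equations, and continue until the surviving equations become genuine ODEs for the surviving unknowns --- but then one must verify that the top-derivative terms generated by the substitution do not cancel those already present (e.g.\ for $N=5$, that the coefficient of $\kappa_1'''$ survives after $\kappa_0$ is eliminated), and that the various combinatorial constants multiplying $\kappa_m$ are themselves nonzero. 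None of this follows from $\kappa_m\neq 0$ alone, and none of it is carried out. Since everything else in your argument is routine, this unestablished solvability claim is not a technicality: it is the entire content of the lemma, and as it stands your proposal does not prove it.
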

\noindent
we arrive at a yet stronger conjecture:
\begin{conj}\label{conj3}
Any translation-invariant Hamiltonian operator of order $N \geq 7$ can be transformed 
into the Gardner operator $D$ by either a transformation from Conjecture~\ref{conj2}
or a composition thereof with a transformation of the form $x=y, v=\sum_{i=0}^k a_i(x)v_i$.
\end{conj}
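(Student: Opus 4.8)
The plan is to route the statement through Lemma~\ref{lm1}, turning it into a factorization problem for $H$ that I then solve by an order-by-order recursion.

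\emph{Reduction to a factorization.} Since $H$ is skew-adjoint, its order $N$ is odd; write $N=2k+1$ and let $a_N$ denote the (nonvanishing) leading coefficient. Because the coefficients $a_i$ depend on $x$ only, $H$ does not depend on $u$, so the Jacobi identity (Hamiltonicity condition) for $H$ holds trivially and $H$ is a Hamiltonian operator; thus Lemma~\ref{lm1} applies. I would apply it with $\varphi=y$ and $\psi=\sum_{i=0}^{k}b_i(y)v_i$, i.e.\ with the sought substitution $x=y$, $v=\sum_{i=0}^{k}b_i(x)v_i$, and with the target $L_2=D_y$. Then $D_y(\varphi)=1$, $K=\sum_{i=0}^{k}(-1)^iD_y^i\circ b_i$ and $K^*=\sum_{i=0}^{k}b_iD_y^i$, so the defining relation of Lemma~\ref{lm1} becomes the operator identity $H=K^*\circ D_y\circ K$. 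Writing $P:=K$, it therefore suffices to find functions $b_0,\dots,b_k$ of $y$ realizing the factorization $H=P^*\circ D_y\circ P$ by a differential operator $P$ of order $k$.

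\emph{Solving the factorization.} Two bookkeeping facts balance the problem. First, $P^*\circ D_y\circ P$ is automatically skew-adjoint, as is $H$. Second, a skew-adjoint operator of order $2k+1$ is determined by its $k+1$ coefficients of odd order $D_y^{2k+1},D_y^{2k-1},\dots,D_y$: the adjointness relations express each even-order coefficient through the odd-order ones and impose no constraint among the latter. Matching the two sides thus amounts to matching these $k+1$ odd-order coefficients, giving exactly $k+1$ relations for the $k+1$ unknowns $b_0,\dots,b_k$. The relation from $D_y^{2k+1}$ is the algebraic equation $(-1)^kb_k^2=a_N$, which determines $b_k$. The remaining $k$ relations, read off from $D_y^{2k-1},\dots,D_y$ in decreasing order, form a determined system of ordinary differential equations for $b_{k-1},\dots,b_0$; at each stage the newest unknown enters with a nonvanishing coefficient (a multiple of $b_k$), so one can eliminate or integrate it and pass to the next, the whole system being solvable locally by the standard existence theorem for ODEs. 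This yields $P$ and hence the required substitution.

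\emph{Main obstacle and caveats.} The crux is the last step: one has to verify that the successive odd-order relations can really be put in solved form --- that at every stage one divides only by a nonvanishing quantity (a multiple of $b_k$) and that the residual equations are of a type covered by standard ODE existence --- and that, by skew-adjointness, the even-order coefficients of $P^*\circ D_y\circ P$ reproduce those of $H$ automatically once the odd-order coefficients agree, so that no overdetermination arises. Since every step is explicit and local, $P$ comes out as a genuine differential operator and the transformed operator carries no nonlocal terms (cf.\ Remark~\ref{pz1}); the substitution is, however, only local, as is to be expected for Darboux-type normal forms. Finally, the sign condition $(-1)^ka_N>0$ is needed for a real $b_k$; when it fails one instead arrives at $-D_y$, which is the Gardner operator up to orientation, and the rest of the argument is unchanged.
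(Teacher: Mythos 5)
The statement you set out to prove is a \emph{conjecture}, and it is not provable by the route you take: the paper itself does not prove it, but arrives at it by combining Conjecture~\ref{conj2} (itself a strengthening of the open De Sole--Kac--Wakimoto classification conjecture) with an auxiliary lemma, stated in the paper \emph{without proof}, asserting that every \emph{quasiconstant} skew-adjoint operator $H=\sum_{i=0}^N a_i(x)D_x^i$ can be taken to the Gardner operator by a substitution $x=y$, $v=\sum_{i=0}^k b_i(x)v_i$. Your first sentence already assumes that the coefficients of $H$ ``depend on $x$ only'', i.e., that $H$ is quasiconstant and skew-adjoint. But the hypothesis of Conjecture~\ref{conj3} is that $H$ is an arbitrary \emph{translation-invariant} Hamiltonian operator of order $N\geq 7$: its coefficients are differential functions of $u$ with no explicit $x$-dependence --- essentially the opposite situation. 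The passage from such an $H$ to a quasiconstant skew-adjoint operator is exactly the content of Conjecture~\ref{conj2}, which is open, and nothing in your argument supplies it. So even if your analysis were airtight, you would have established only the implication ``Conjecture~\ref{conj2} implies Conjecture~\ref{conj3}'' (equivalently, the paper's unproved auxiliary lemma), not the conjecture itself; this conditionality is nowhere acknowledged in your write-up.

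Even read charitably as a proof of that auxiliary lemma, your sketch has the right frame but a flawed core. The reduction via Lemma~\ref{lm1} to the factorization $H=K^*\circ D_y\circ K$ with $K=\sum_{i=0}^k(-1)^iD_y^i\circ b_i$ is correct, as are your two bookkeeping facts (automatic skew-adjointness of $K^*\circ D_y\circ K$; a skew-adjoint operator is determined by its odd-order coefficients) and the top relation $(-1)^kb_k^2=a_N$. The recursion, however, is not as you describe. Expanding the coefficient of $D_y^{2k+1-2m}$ in $K^*\circ D_y\circ K$, one finds it involves all $b_j$ with $j\geq k-2m$: each successive odd-order relation brings in \emph{two} new unknowns, namely $b_{k-2m+1}$ (algebraically and through $b_{k-2m+1}'$) and $b_{k-2m}$ (linearly, with coefficient $2(-1)^kb_k$). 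So ``solve each relation for its newest unknown in decreasing order'' does not close; for instance, for $N=7$ (so $k=3$) the relation at order $D_y^5$ already contains the three unknowns $b_3,b_2,b_1$. A correct scheme would have to, e.g., eliminate the unknowns $b_{k-2m}$ algebraically from these relations and then treat the remaining relations as a coupled system of higher-order nonlinear ODEs for $b_{k-1},b_{k-3},\dots$, whose local solvability (solved form, nonvanishing leading coefficients) still has to be verified; your appeal to ``the standard existence theorem for ODEs'' does not cover this as stated. This gap is presumably the very reason the paper records the quasiconstant reduction only as an unproved lemma supporting a conjecture.
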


\section{Applications to evolution equations}
Recall that the transformation $x=v$, $u=1/v_y$ can be written as the composition of the potentiation
$x=z$, $u=w_z$ and of the hodograph transformation $z=v$, $w=y$. The first of these is nothing but introduction of the potential $w$ for $u$. It is readily seen that a bi-Hamiltonian evolution equation
\begin{equation}\label{3}u_t=H_1\delta_u\mathcal{T}_1=H_2\delta_u\mathcal{T}_2,
\end{equation} where $H_i=\sum_{j=1}^{k_i}c_{ij}H^{(N_{ij},0)}$, $i=1,2$, $k_i$ are arbitrary natural numbers, and $c_{ij}$ are arbitrary constants, is nothing but the pullback of the bi-Hamiltonian equation
\begin{equation}\label{4}w_t=\check {H}_1\delta_w\check{\mathcal{T}}_1=\check{H}_2\delta_w\check{\mathcal{T}}_2,
\end{equation}
where $\check{H}_i=\sum_{j=1}^{k_i}c_{ij}\check{H}^{(N_{ij},0)}$, $i=1,2$,
$$ \check{H}^{(N,0)}=- D_z\circ\left(\frac{1}{w_z}D_z\right)^{2n},$$
and $\check{\mathcal{T}}_i$ are obtained from $\mathcal{T}_i$ using the substitution $x=z$, $u=w_z$. It is natural to refer to (\ref{4}) as to the potential form of (\ref{3}).

\begin{prp}\label{prlin}
The transformation $z=v$, $w=y$, where $y$ is the new independent variable, linearizes the potential form (\ref{4}) of the bi-Hamiltonian evolution equation (\ref{3}).\looseness=-1
\end{prp}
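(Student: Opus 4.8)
The plan is to reduce the statement to the assertion that a bi-Hamiltonian evolution equation whose two Hamiltonian operators have constant coefficients is necessarily linear, and then to prove the latter. First I would exploit the factorization, recalled just before the proposition, of the transformation $x=v$, $u=1/v_y$ into the potentiation $x=z$, $u=w_z$ followed by the hodograph transformation $z=v$, $w=y$. Since the potentiation carries (\ref{3}) into its potential form (\ref{4}), applying the hodograph to (\ref{4}) is the same as applying the full transformation $x=v$, $u=1/v_y$ to (\ref{3}). By Theorem \ref{th1} (which itself rests on Lemma \ref{lm1}) the latter sends every $H^{(N,0)}$ to the constant-coefficient operator $-D_y^{2n+1}$; as the potentiation sends $H^{(N,0)}$ to $\check{H}^{(N,0)}$, the hodograph must send each $\check{H}^{(N,0)}$, and hence each $\check{H}_i$, to the constant-coefficient operator $\widetilde{H}_i=\sum_j c_{ij}\,(-D_y^{2n_{ij}+1})$. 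By Lemma \ref{lm1} these are again skew-adjoint Hamiltonian operators, so the hodograph turns (\ref{4}) into the bi-Hamiltonian equation $v_t=\widetilde{H}_1\delta_v\widetilde{\mathcal{T}}_1=\widetilde{H}_2\delta_v\widetilde{\mathcal{T}}_2$ with constant coefficients, where $\widetilde{\mathcal{T}}_i$ are the transforms of $\check{\mathcal{T}}_i$; it remains to show that this equation is linear.

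Set $K:=\widetilde{H}_1\delta_v\widetilde{\mathcal{T}}_1=\widetilde{H}_2\delta_v\widetilde{\mathcal{T}}_2$ and let $\ell_K$ denote its Fr\'echet derivative (with formal adjoint $\ell_K^*$); note that $K$ is a genuine differential function, since the hodograph is local on these operators. The next step is to extract an operator identity from the two Hamiltonian representations. Because each $\delta_v\widetilde{\mathcal{T}}_i$ is a variational derivative, its Fr\'echet derivative is self-adjoint; combining this with the skew-adjointness of the constant-coefficient operators $\widetilde{H}_i$ and with the relation $\ell_K=\widetilde{H}_i\,\ell_{\delta_v\widetilde{\mathcal{T}}_i}$ (valid precisely because the $\widetilde{H}_i$ have constant coefficients) gives $\ell_K\widetilde{H}_i+\widetilde{H}_i\ell_K^*=0$ for $i=1,2$. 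Eliminating $\ell_K^*$ and using that $\widetilde{H}_1$ and $\widetilde{H}_2$ commute (both being polynomials in $D_y$ with constant coefficients) yields the differential-operator identity $\widetilde{H}_2\,\ell_K\,\widetilde{H}_1=\widetilde{H}_1\,\ell_K\,\widetilde{H}_2$, equivalently $[\ell_K,R]=0$ with the recursion operator $R=\widetilde{H}_2\widetilde{H}_1^{-1}$.

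Finally I would read off linearity from this identity. Writing $\ell_K=\sum_{j=0}^m(\partial K/\partial v_j)D_y^j$ and comparing the principal symbols on the two sides of $\widetilde{H}_2\,\ell_K\,\widetilde{H}_1=\widetilde{H}_1\,\ell_K\,\widetilde{H}_2$, the top-order contribution is proportional to $\bigl(\widetilde{H}_2(\xi)\,\partial_\xi\widetilde{H}_1(\xi)-\widetilde{H}_1(\xi)\,\partial_\xi\widetilde{H}_2(\xi)\bigr)\,\partial_y(\partial K/\partial v_m)$, where $\xi$ is the symbol variable and $\widetilde{H}_i(\xi)$ denotes the symbol of $\widetilde{H}_i$. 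Since the pair $\widetilde{H}_1$, $\widetilde{H}_2$ is genuinely bi-Hamiltonian, the two operators are nonproportional, so the symbol factor does not vanish identically and we conclude $\partial_y(\partial K/\partial v_m)=0$; as this must hold along all solutions, $\partial K/\partial v_m$ is an absolute constant. Subtracting the now constant-coefficient leading term $(\partial K/\partial v_m)D_y^m$ and inducting downward on the order of $\ell_K$ shows that every coefficient $\partial K/\partial v_j$ is an absolute constant, whence $K$ is affine-linear in $v$ and its $y$-derivatives. Thus the hodograph $z=v$, $w=y$ indeed linearizes (\ref{4}).

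The main obstacle is the middle step: passing cleanly from the two Hamiltonian representations to the commutation relation $[\ell_K,R]=0$. This requires the standard but delicate facts that variational derivatives have self-adjoint Fr\'echet derivatives and that the constant-coefficient $\widetilde{H}_i$ are skew-adjoint and formally invertible (which holds since each nontrivial linear combination is a nonzero odd-order operator), together with the nonproportionality of $\widetilde{H}_1$ and $\widetilde{H}_2$ that makes the symbol of $R$ non-constant. Once $[\ell_K,R]=0$ is established, the descent on the order of $\ell_K$ is routine.
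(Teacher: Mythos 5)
Your proof is correct and follows essentially the same route as the paper's: reduce, via the factorization through potentiation and the hodograph transformation, to an equation that is bi-Hamiltonian with respect to two constant-coefficient operators $\widetilde{H}_1$, $\widetilde{H}_2$, then show that the Fr\'echet derivative of its right-hand side commutes with the recursion operator $R=\widetilde{H}_2\circ\widetilde{H}_1^{-1}$ and conclude that it must have constant coefficients. The only differences are matters of detail rather than of method: where the paper cites the standard fact that the ratio of the two Hamiltonian operators of a bi-Hamiltonian system is a formal recursion operator (together with the invariance $\mathrm{pr}\,\mathbf{v}_K(\mathcal{D}_i)=0$ for constant-coefficient operators), you re-derive the commutation identity from the self-adjointness of $\ell_{\delta_v\widetilde{\mathcal{T}}_i}$ and the skew-adjointness of the $\widetilde{H}_i$, and you spell out the Wronskian-type symbol descent that the paper compresses into ``whence it readily follows that $\mathrm{D}_K$ has constant coefficients.''
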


Before proving this let us point out the following
important consequence of this result.
\begin{ds}\label{ds2}
The differential substitution $x=v$, $u=1/v_y$ relates any equation of the form (\ref{3})
to a linear evolution equation with constant coefficients.
\end{ds}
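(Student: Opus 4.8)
The plan is to derive Corollary~\ref{ds2} by splicing together the potentiation step with Proposition~\ref{prlin}, which I take as given. Recall that the substitution $x=v$, $u=1/v_y$ factors as the potentiation $x=z$, $u=w_z$ followed by the hodograph transformation $z=v$, $w=y$. The first link is already in place: as recorded in the paragraph preceding Proposition~\ref{prlin}, the potentiation relates the bi-Hamiltonian equation (\ref{3}) to its potential form (\ref{4}), with operators $\check H_i=\sum_j c_{ij}\check H^{(N_{ij},0)}$ and functionals $\check{\mathcal T}_i$. The second link is Proposition~\ref{prlin} itself, which asserts that the hodograph $z=v$, $w=y$ linearizes (\ref{4}).

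First I would check that the two links compose into the single substitution $x=v$, $u=1/v_y$. Substituting $z=v$, $w=y$ into the potentiation gives $x=z=v$ and $u=w_z$; since under the hodograph $w=y$ is viewed as a function of $z=v$, one has $D_z=(v_y)^{-1}D_y$ and hence $u=w_z=D_z(y)=1/v_y$. Thus the composite is exactly $x=v$, $u=1/v_y$, and a solution $v(y,t)$ of the linear equation furnished by Proposition~\ref{prlin} produces, via $x=v$, $u=1/v_y$, a solution of (\ref{3}). This already establishes that $x=v$, $u=1/v_y$ relates (\ref{3}) to a linear evolution equation.

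It then remains to upgrade \emph{linear} to \emph{linear with constant coefficients}. Here I would invoke Theorem~\ref{th1}: under $x=v$, $u=1/v_y$ each summand $H^{(N_{ij},0)}$ of $H_i$ becomes the constant-coefficient operator $-D_y^{2n_{ij}+1}$, so the transformed Hamiltonian operators $\widetilde H_i=\sum_j c_{ij}\bigl(-D_y^{2n_{ij}+1}\bigr)$ are constant-coefficient and skew-adjoint, and the linearized equation takes the form $v_t=\widetilde H_i\,\delta_v\widetilde{\mathcal T}_i$. Writing $v_t=Lv$ with $L$ the linear operator supplied by Proposition~\ref{prlin}, the two Hamiltonian representations give $\delta_v\widetilde{\mathcal T}_i=\widetilde H_i^{-1}Lv$ (each $\widetilde H_i$ being formally invertible as a constant-coefficient pseudodifferential operator). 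Self-adjointness of the Fr\'echet derivative of a variational derivative forces $(\widetilde H_i^{-1}L)^*=\widetilde H_i^{-1}L$, and using $\widetilde H_i^*=-\widetilde H_i$ this reads $L^*=-\widetilde H_i^{-1}L\widetilde H_i$ for both $i=1,2$. Consequently $L$ commutes with the constant-coefficient recursion operator $R=\widetilde H_2\widetilde H_1^{-1}$, which is of nonzero order precisely because the two combinations $H_1,H_2$ are nontrivially distinct.

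I expect this last point to be the main obstacle, since commuting with $R$ is what should pin down constant coefficients: for $L=\sum_k a_k(y)D_y^k$, the relation $[L,R]=0$ with $R$ a nonzero-order constant-coefficient operator forces, by comparing symbols order by order (the top-order contribution being proportional to $a_\kappa'(y)$), that $a_k'(y)=0$ for every $k$, i.e.\ that $L$ has constant coefficients. Assembling the three steps then yields that $x=v$, $u=1/v_y$ relates (\ref{3}) to a linear evolution equation with constant coefficients, as claimed. The genuinely hard analytic content lies in Proposition~\ref{prlin}, which we are entitled to assume; the remaining work for the corollary is the bookkeeping of the composition together with the coefficient argument just sketched.
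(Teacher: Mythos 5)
Your proposal is correct, but the mechanism you use for the crucial ``constant coefficients'' step is genuinely different from the paper's. The paper never needs your upgrading argument: inside its proof of Proposition~\ref{prlin} it invokes Lemma~\ref{lm2}, which says that \emph{any} evolution equation that is bi-Hamiltonian with respect to two constant-coefficient Hamiltonian operators is automatically linear with constant coefficients; that lemma is proved via the formal recursion operator $\mathcal{R}=\mathcal{D}_2\circ\mathcal{D}_1^{-1}$ and the identity $\mathrm{pr}\,\mathbf{v}_K(\mathcal{R})-[\mathrm{D}_K,\mathcal{R}]=0$, where flow-invariance of the constant-coefficient $\mathcal{D}_i$ forces $\mathrm{pr}\,\mathbf{v}_K(\mathcal{R})=0$, hence $[\mathrm{D}_K,\mathcal{R}]=0$. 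Thus linearity and constancy of coefficients come out in one stroke, with no need to assume linearity first. You, working only from the literal statement of Proposition~\ref{prlin} (which promises just linearization), fill the gap by a different route: the Helmholtz condition (self-adjointness of $\mathrm{D}_{\delta_v\widetilde{\mathcal{T}}_i}=\widetilde{H}_i^{-1}L$) combined with skew-adjointness of $\widetilde{H}_i$, giving $L^*=-\widetilde{H}_i^{-1}L\widetilde{H}_i$ for $i=1,2$ and hence $[L,R]=0$. Both arguments funnel into the same commutation relation and the same symbol comparison, so your variational detour is a legitimate and arguably more elementary substitute for the paper's prolongation computation; the paper's lemma is the stronger tool, since it produces linearity for free rather than importing it. Two minor imprecisions in your sketch are worth noting. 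First, non-proportionality of $H_1$, $H_2$ guarantees only that $R=\widetilde{H}_2\widetilde{H}_1^{-1}$ is not a constant multiple of the identity, not that it has nonzero order (e.g.\ $\widetilde{H}_2=\widetilde{H}_1+cD_y$ yields an order-zero $R$); the symbol argument still works, but one must use the leading non-constant term of $R$, and the derivative that survives at top order is $a_{k_0}'$ for the highest $k_0$ with $a_{k_0}$ non-constant, not necessarily the coefficient at the order of $L$. Second, the non-proportionality assumption you rightly flag is equally implicit in the paper (both in Lemma~\ref{lm2} and in the reading of (\ref{3})); without it the statement fails, since an equation Hamiltonian with respect to a single operator need not linearize, so this is a shared caveat rather than a defect of your argument.
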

Informally, this just means that the inverse of the transformation $x=v$,
$u=1/v_y$ linearizes (\ref{3}), but this statement should be treated with some care, as this transformation is not uniquely invertible, and the inverse is not a  differential subsitution, cf.\ Remark~\ref{pz2} and \cite{sokolov}.\looseness=-1

\begin{proof}[Proof of Proposition \ref{prlin}]
The hodograph transformation $z=v$, $w=y$ sends (\ref{4}) into
$$v_t=\widetilde{H}_1\delta_v\widetilde{\mathcal{T}}_1
=\widetilde{H}_2\delta_v\widetilde{\mathcal{T}}_2,$$
where $\widetilde{H}_i=\sum_{j=1}^{k_i}c_{ij}
\widetilde{H}^{(N_{ij},0)}=-\sum_{j=1}^{k_i}c_{ij}D_y^{N_{ij}-2}$ are linear differential operators with constant coefficients,
and $\widetilde{\mathcal{T}}_i$ are obtained from $\check{\mathcal{T}}_i$ using  the transformation in question. 

\begin{lm}\label{lm2}
Let $v_t=K[v]$ be an $n${\em th} order evolution equation which is bi-Hamiltonian with respect to a pair of Hamiltonian operators with constant coefficients. Then $v_t=K[v]$ is
necessarily a {\em linear} equation with constant coefficients, i.e., we have $v_t=\sum_{i=0}^n c_i v_i$, where $c_i=\mathrm{const}$.
\end{lm}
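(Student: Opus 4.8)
The plan is to recast the bi-Hamiltonian hypothesis as a pair of operator identities for the Fréchet derivative of $K$ and then to read off the constancy of its coefficients one order at a time. Write $\mathcal{H}_1,\mathcal{H}_2$ for the two Hamiltonian operators with constant coefficients; as Hamiltonian operators they are skew-adjoint, $\mathcal{H}_j^{*}=-\mathcal{H}_j$. Denote by $L$ the Fréchet derivative (linearization) of $K$, so that $L=\sum_{i=0}^{n}a_i\,D_y^{\,i}$ with $a_i=\partial K/\partial v_i$ differential functions, and write $P^{*}$ for the formal adjoint as in Lemma~\ref{lm1}. Since $\mathcal{H}_j$ does not depend on $v$, differentiating $K=\mathcal{H}_j\,\delta_v\widetilde{\mathcal T}_j$ gives $L=\mathcal{H}_jE_j$ with $E_j:=(\delta_v\widetilde{\mathcal T}_j)_{*}$ self-adjoint by the Helmholtz condition; taking adjoints and using $\mathcal{H}_j^{*}=-\mathcal{H}_j$ then yields
\begin{equation}\label{selfadj}
L\,\mathcal{H}_j+\mathcal{H}_j\,L^{*}=0,\qquad j=1,2.
\end{equation}
In other words $L\mathcal{H}_j$ is self-adjoint for $j=1,2$, and the whole task is to deduce from this that every $a_i$ is constant.

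First I would exploit the fact that \eqref{selfadj} is \emph{linear} in the operator, so it automatically holds for every element of the pencil $\mathrm{span}\{\mathcal{H}_1,\mathcal{H}_2\}$. Since $\mathcal{H}_1$ and $\mathcal{H}_2$ are linearly independent, this pencil contains operators of two \emph{distinct} orders: if $\mathcal{H}_1,\mathcal{H}_2$ already have different orders I keep them, and otherwise I subtract a suitable multiple of one from the other to lower its order. Relabelling, I may thus assume $\operatorname{ord}\mathcal{H}_1=p$, $\operatorname{ord}\mathcal{H}_2=q$ with $p\neq q$ and \eqref{selfadj} holding for both. Eliminating $L^{*}$ between the two identities in \eqref{selfadj} then produces the single commutation relation $[L,R]=0$, where $R=\mathcal{H}_2\mathcal{H}_1^{-1}$ is a formal pseudodifferential operator \emph{with constant coefficients} of nonzero order $q-p$; equivalently, one may bypass $R$ and simply compare the coefficients of the two differential identities \eqref{selfadj}.

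The heart of the proof is then an order-by-order descent. The top-order coefficient of $[L,R]$ is proportional to $(q-p)\,r_{q-p}\,D_y(a_n)$, where $r_{q-p}\neq0$ is the leading coefficient of $R$; because $q\neq p$ this forces $D_y(a_n)=0$, i.e.\ $a_n$ is constant. Proceeding inductively, the coefficient of the next lower power of $D_y$ reduces to $(q-p)\,r_{q-p}\,D_y(a_{n-s})$ together with a combination of higher total derivatives of the coefficients $a_n,\dots,a_{n-s+1}$ already shown to be constant; hence $D_y(a_{n-s})=0$ at each step, and all the $a_i$ are constants. Consequently $L$ has constant coefficients, so $K$ is affine in the jet variables with constant coefficients, and translation invariance (together with $K\in\operatorname{Im}\mathcal{H}_1$) removes the free term, leaving $v_t=\sum_{i=0}^{n}c_i v_i$ with $c_i=\mathrm{const}$.

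I expect the main obstacle to be the bookkeeping in this descent, namely checking that at each order the only genuinely new contribution is the single term proportional to $D_y(a_{n-s})$, all remaining contributions being higher derivatives of coefficients already known to be constant. It is worth stressing that the presence of \emph{two} structures, of different orders, is essential: a single constant-coefficient Hamiltonian structure does not suffice, as the KdV equation $v_t=v_{yyy}+v\,v_y$ is Hamiltonian with respect to the constant-coefficient operator $D_y$ yet is nonlinear. It is exactly the inequality $p\neq q$, available only thanks to the second structure, that produces the factor $q-p$ annihilating $D_y(a_n)$.
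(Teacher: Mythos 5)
Your proof is correct and, at its core, runs along the same lines as the paper's: both arguments reduce the problem to the commutation relation $[\mathrm{D}_K,\mathcal{R}]=0$, where $\mathcal{R}=\mathcal{H}_2\circ\mathcal{H}_1^{-1}$ is a formal pseudodifferential operator with constant coefficients, and then conclude that $\mathrm{D}_K$ has constant coefficients. The difference is one of self-containedness. The paper cites the standard fact that the ratio of the operators of a bi-Hamiltonian pair is a formal recursion operator, notes that $\mathrm{pr}\,\mathbf{v}_K(\mathcal{R})=0$ because the operators are constant, and then asserts that the constancy of the coefficients of $\mathrm{D}_K$ ``readily follows.'' You instead derive the identities $L\mathcal{H}_j+\mathcal{H}_jL^{*}=0$ from scratch (skew-adjointness plus the Helmholtz condition; this is precisely the invariance statement $\mathrm{pr}\,\mathbf{v}_K(\mathcal{H}_j)=0$ specialized to constant $\mathcal{H}_j$), eliminate $L^*$ to get $[L,\mathcal{R}]=0$, and carry out the descent explicitly. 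Two of your additions are genuine improvements on the paper's terse argument: (i) you observe that the conclusion needs $\mathcal{R}$ to have nonzero order and secure this by passing to the pencil and extracting two operators of distinct orders, which uses the linear independence of $\mathcal{H}_1,\mathcal{H}_2$ — an assumption that is indeed indispensable (as your KdV remark shows, a single constant-coefficient structure, or a proportional pair, gives nothing), and which the paper passes over in silence; (ii) your descent is even cleaner than you fear: once $a_n,\dots,a_{n-s+1}$ are known to be constant, those terms commute with $\mathcal{R}$ and drop out of $[L,\mathcal{R}]$ altogether, so the coefficient of the relevant power of $D_y$ is exactly a nonzero multiple of $(q-p)\,r_{q-p}\,D_y(a_{n-s})$, with no leftover contributions to track.

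The one step that does not hold up as written is the removal of the free term at the end. Constancy of the coefficients of $\mathrm{D}_K$ only gives $K=\sum_{i=0}^{n}c_i v_i+f(y)$, and you dispose of $f$ by invoking translation invariance, which is not among the hypotheses of the lemma; nor does $K\in\mathrm{Im}\,\mathcal{H}_1$ help, since, for instance,
$$v_{yyy}+y=D_y\,\delta_v\!\int\left(-\tfrac12 v_y^2+\tfrac12 y^2v\right)dy=D_y^3\,\delta_v\!\int\left(\tfrac12 v^2+\tfrac{1}{24}y^4v\right)dy,$$
so $v_t=v_{yyy}+y$ is bi-Hamiltonian with respect to the linearly independent constant-coefficient pair $D_y$, $D_y^3$ without being of the form $\sum c_i v_i$. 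This, however, is a defect of the lemma's formulation rather than of your argument specifically: the paper's own proof jumps from ``$\mathrm{D}_K$ has constant coefficients'' to the stated conclusion with no justification at all, and the honest conclusion in both cases is that $K$ is affine, $K=\sum c_i v_i+f(y)$ with constant $c_i$. This weaker statement still yields the linearization asserted in Proposition \ref{prlin} and Corollary \ref{ds2}, up to an inhomogeneous term depending only on $y$.
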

\begin{proof}[Proof of the lemma]
Denote the Hamiltonian operators in question by $\mathcal{D}_i$, $i=1,2$.
Since $v_t=K[v]$ is bi-Hamiltonian with respect to these operators by assumption, their ratio $\mathcal{R}=\mathcal{D}_2\circ\mathcal{D}_1^{-1}$ is a (formal) recursion operator for this equation, that is (see e.g.\ \cite{olver} for details),
$$\mathrm{pr}\ \textbf{v}_K(\mathcal{R})-[\mathrm{D}_K,\mathcal{R}]=0,$$
where $\mathrm{D}_K=\sum_{i=0}^{n}\left(\partial K/\partial u_{i}\right)D^ i$ is the Fr\'echet derivative of $K$ and $\mathrm{pr}\ \textbf{v}_K$ is the prolongation of the evolutionary vector field $\textbf{v}_K$ with the characteristic $K$.
Since $\mathcal{D}_i$, $i=1,2$, have constant coefficients by assumption, we have $\mathrm{pr}\ \textbf{v}_K(\mathcal{D}_i)=0$, and therefore  $\mathrm{pr}\ \textbf{v}_K(\mathcal{R})=0$, so $\mathrm{D}_K$ commutes with $\mathcal{R}$, whence it readily follows that $\mathrm{D}_K$ has constant coefficients (recall that $K$ is independent of $t$ by assumption) and therefore $K$ indeed is a linear combination of $v_i$ with constant coefficients.
\end{proof}
The desired result now readily follows from the above lemma.
\end{proof}

\begin{pr} Consider a bi-Hamiltonian evolution equation
\begin{equation}\label{ex1}
u_t=D_x^3\left(u^{-2}\right)=H^{(3,0)}\delta_u \mathcal{T}_1=H^{(5,0)}\delta_u \mathcal{T}_2,
\end{equation}
where $\mathcal{T}_1=-\int dx/u$ and $\mathcal{T}_2=\int x^2 u dx$.

The potential form (\ref{4}) of (\ref{ex1}) reads
\begin{equation}\label{ex1pot}
w_t=D_z^2\left(w_z^{-2}\right)=\check{H}^{(3,0)}\delta_w\check{\mathcal{T}_1}
=\check{H}^{(5,0)}\delta_w\check{\mathcal{T}_2}.
\end{equation}
Recall that $u=w_z$ and $x=z$; we have $\check{H}^{(3,0)}=-D_z$, $\check{H}^{(5,0)}=-D_z\circ ((1/w_z)D_z)^2$, $\check{\mathcal{T}}_1=-\int dz/w_z$, and $\check{\mathcal{T}}_2=\int z^2 w_z dz$. Note that (\ref{ex1pot}) has, up to a rescaling of $t$, the form (2.31) from \cite{clarkson}.

In perfect agreement with Proposition~\ref{prlin} (cf.\ also Proposition~2.2 in \cite{clarkson}) the hodograph transformation
$z=v$, $w=y$ linearizes (\ref{ex1pot}) into a (trivially) bi-Hamiltonian equation
\begin{equation}\label{ex1lin}
v_t=-2 v_{yyy}=\widetilde{H}^{(3,0)} \delta_v \widetilde{\mathcal{T}}_1=\widetilde{H}^{(5,0)} \delta_v \widetilde{\mathcal{T}}_2,
\end{equation}
where $\widetilde{H}^{(3,0)}=-D_y$, $\widetilde{H}^{(5,0)}=-D_y^3$,
$\widetilde{\mathcal{T}}_1=-\int v_y^2 dy$, and $\widetilde{\mathcal{T}}_2= \int v^2 dy$.  

The transformation $x=v$, $u=1/v_y$ relates (\ref{ex1lin}) to (\ref{ex1}), cf.\ Corollary~\ref{ds2}, so (\ref{ex1}) provides an explicit example of a $C$-integrable (rather than $S$-integrable) bi-Hamiltonian system, just as discussed in Introduction.\looseness=-1
\end{pr}
\section*{Acknowledgements}
The author thanks Dr. A. Sergyeyev for stimulating discussions. This research was supported in part by the Ministry of Education, Youth and Sports of the Czech Republic under grant 
MSM 4781305904.\looseness=-1

\end{document}